\relax
\documentclass[letterpaper]{article} 
\usepackage{aaai20}  
\usepackage{times}  
\usepackage{helvet} 
\usepackage{courier}  
\usepackage[hyphens]{url}  
\usepackage{graphicx} 
\urlstyle{rm} 
\usepackage{graphicx}  
\frenchspacing  
\setlength{\pdfpagewidth}{8.5in}  
\setlength{\pdfpageheight}{11in}  

\usepackage{algorithmic}
\usepackage{algorithm}
\usepackage{bm}
\usepackage{amsmath}
\usepackage{amsthm}
\usepackage{amsfonts}
\usepackage{multirow}
\usepackage{float}
\usepackage{mathrsfs}
\theoremstyle{plain}

\pdfinfo{
/Title (Optimal Procurement Auction for Cooperative Production of virtual products)
/Keywords (mechanism design, cooperative game, procurement game)
} 

\setcounter{secnumdepth}{0} 

%
\setlength\titlebox{3.5in} 
\title{Optimal Procurement Auction for Cooperative Production of Virtual Products: Vickrey-Clarke-Groves Meet Cr{\'e}mer-McLean}
\author{
Mingshu Cong
\thanks{
    This work was supported by LogiOcean Technologies, Ltd.
}\\
The FinTech and Blockchain Laboratory\\
The University of Hong Kong\\
LogiOcean Technologies, Ltd. \\
miranda.cong@logiocean.com
\And
\Large \textbf{
        Xi Weng}\\
Guanghua School of Management\\
Peking University\\
wengxi125@gsm.pku.edu.cn
\AND
\Large \textbf{
    Han Yu} \\
School of Computer Science \& Engineering\\
Nanyang Technological University\\
han.yu@ntu.edu.sg
\And
\Large \textbf{
    Jiabao Qu} \\
LogiOcean Technologies, Ltd.\\
qujiabao@logiocean.com
\And
\Large \textbf{
    Siu Ming Yiu} \\
The FinTech and Blockchain Laboratory \\
The University of Hong Kong\\
smyiu@cs.hku.hk
}

\begin{document}

\maketitle

\begin{abstract}
We set up a supply-side game-theoretic model for cooperative production of virtual products. In our model, a group of producers collaboratively produce a virtual product by contributing costly input resources to a production coalition. Producers are capacitated, i.e., they cannot contribute more resources than their capacity limits. Our model is an abstraction of emerging internet-based business models such as federated learning and crowd computing. To maintain an efficient and stable production coalition, the coordinator should share with producers the income brought by the virtual product. Besides the demand-side information asymmetry, another two sources of supply-side information asymmetry intertwined in this problem: 1) the capacity limit of each producer and 2) the cost incurred to each producer. In this paper, we rigorously prove that a supply-side mechanism from the VCG family, PVCG, can overcome such multiple information asymmetry and guarantee truthfulness. Furthermore,
with some reasonable assumptions, PVCG simultaneously attains truthfulness, ex-post allocative efficiency, ex-post individual rationality, and ex-post weak budget balancedness on the supply side, easing the well-known tension between these four objectives in the mechanism design literature.
\end{abstract}

\section{Introduction}
On this flat earth created by the internet, hierarchies of companies are falling away. Production activities are no longer a process confined within the border of an enterprise. Independent businesses can cooperate seamlessly through distributed computing systems to produce valuable virtual products. For example, a fast-developing technology, federated learning (FL), enables businesses to train (produce) artificial intelligence models collaboratively \cite{yang2019federated}. Another example is crowd computing, where participants contribute their redundant computing capacity to distributed computing tasks (e.g., protein folding simulation in the Folding@home project \cite{Beberg2009Folding}).
These production models differ from classical ones in at least two respects: First, collaborative production creates synergies, i.e., participants working together generate more value than they working separately; Second, the output is virtual products, which is non-tangible and non-rivalrous, i.e., the consumption of the virtual products by one participant does not reduce the amount available for others. While the first feature encourages cooperation, the second feature results in the free rider problem, i.e., participants lack the incentives to ally to attain the socially optimal result. Currently, cooperative production applications are mostly run as small-scale not-for-profit projects, heavily relying on volunteers' participation, thus hindering their popularization.
A profitable business model is necessary for cooperative production to develop faster in the economic world. 

The design of such a business model is challenging due to two reasons: First, there are multiple dimensions of asymmetric information, i.e., the demand side is privately informed about its valuation of cooperative production while the supply side is privately informed about its cost and capacity constraint of supplying input resources; Second, in many realistic applications, an agent can both demand and supply cooperative production, e.g., businesses may both demand and supply data under federated learning. In this paper, we show that if the demand side's valuation satisfies a correlation condition, then we can use the combination of Cr{\'e}mer-McLean mechanism \cite{creemer1985optimal,kosenok2008individually}  and VCG mechanism to separately solve the asymmetric information problem for the demand side and supply side: on the demand side, the Cr{\'e}mer-McLean mechanism enables us to extract full consumer surplus for their usage of the virtual products and maximize the revenue distributable to the supply side; on the supply side, the VCG mechanism enables us to overcome the free-rider problem by compensating producers for their contribution of input resources and achieve socially efficient allocation. 

In this paper,  our contributions include: 1) We set up a game-theoretical model for cooperative production of virtual products, which takes into consideration twofold supply-side information asymmetry in addition to the demand-side information asymmetry; 2) We propose a procurement auction that incentivizes participants to report their private information honestly (truthfulness) and maximizes social welfare ex post (allocative efficiency); 3) We apply neural network methods to optimizing for ex-post individual rationality and ex-post weak budget balancedness; 4) We prove with some reasonable assumptions, truthfulness, ex-post allocative efficiency, ex-post individual rationality and ex-post weak budget balancedness can be attained simultaneously.


\section{Related Works}
Our game-theoretic model is an abstraction of emerging internet-based cooperative production practices. Although underlying technologies of cooperative production are fast developing, existing studies on these new business models are quite limited. From a more general perspective, our model is a special type of cooperative games \cite{peleg2007introduction,branzei2008models,chalkiadakis2011computational},
which have attracted great attention from the algorithmic game theory community \cite{sandholm1997coalitions,conitzer2006complexity,chalkiadakis2010cooperative,elkind2009computational,bachrach2009cost}.  
Designing an optimal sharing rule is a key concern when studying cooperative games \cite{kronbak2007sharing,niyato2011resource,weikard2009cartel}. 

We decouples the cooperative production game into a demand-side game and a supply-side game. On the demand side, we use the Cr{\'e}mer-McLean mechanism \cite{creemer1985optimal,kosenok2008individually,albert2015assessing} to extract consumer surplus. On the supply side, we proposed the PVCG mechanism, which borrows the idea of procurement auction from procurement games \cite{chen2005efficient,chandrashekar2007auction,iyengar2008optimal,drechsel2010computing}. Compared to existing research, our model takes into account three sources of information asymmetry, one on the demand side and two on the supply side, yet we simultaneously achieve truthfulness, allocative efficiency, individual rationality, and ex-post weak budget balancedness, easing the well-known tension between these objectives \cite{jackson2014mechanism}.
We also follow the recent trend of neural-network-based mechanism design \cite{shen2019automated}.

Federated learning \cite{yang2019federated,FL:2019} and crowd computing \cite{Beberg2009Folding,larson2009Folding} are examples of the cooperative production game under study. Therefore, our work is also related to the literature on training machine learning models with strategic participants \cite{jia2019towards,wang2019,cai2015optimum,richardson2019rewarding,westenbroek2019competitive,yu2020} and crowd computing from the angle of game theory \cite{christoforou2013crowd}.

\section{Game Settings}
We study a cooperative production game where a set of $n$ \emph{producers}, denoted by $N=\{0,1,\ldots,n-1\}$, cooperatively produce a valuable \emph{virtual product}, of which copies are dilivered to $m$ \emph{consumers}, denoted by $M=\{n, n+1, \ldots, n+m-1 \}$. A participant may be both a producer and a consumer, but we assume that its behaviors as a producer and as a consumer are independent. On the \emph{supply side}, producers contribute input resources to the production coalition, e.g., labor, raw materials, equipments, etc. On the \emph{demand side}, consumers are granted access to utilize the output virtual product.
We introduce a parameter $x_i \ge \bm{0}$ to measure the input resources contributed by producer $i$. $x_i$ may be a vector when multiple input resources are involved. We use another parameter $y$ to measure the \emph{usefulness} of the output product, which is determined by the contributed input resources from all producers, i.e., $y$ is a function of $\bm{x} = (x_0, \ldots , x_{n-1})$. 

The usefulness of the output product determines the value it brings to consumers. We use the parameter $v_j, j\in N$ to denote the valuation of participant $j \in M$ on the output product. $v_j$ is a function of $y$ and a type parameter $\theta_j\in\Theta_j$ (called \emph{valuation type}) that reflects the heterogeneity among consumers. The type space is $\Theta_j=\{\theta_j^1,\cdots,\theta_j^{m_j}\}$ of cardinality $m_j<\infty$ and we assume that $0\in \Theta_j$ for all $j$. We denote this function by $w(\cdot)$ and call the composite function of $w(\cdot)$ and $y(\cdot)$ the \emph{individual valuation function}, denoted by $v(\cdot)$, i.e., $v(\bm{x},\theta_i)=w(y(\bm{x}),\theta_i)=v_i$.  For convenience, $\theta_j$ is such chosen that $v(\bm{x},\theta_j) \equiv 0$ when $\theta_j = 0$. We denote $\bm{\theta}=(\theta_n, \theta_{n+1},...,\theta_{n+m-1})$.
Contributing resources to the cooperative production process incurs costs to participants. Producer $i$'s cost $c_i = c(x_i, \gamma_i)$ is a function (called \emph{individual cost function}) of $x_i$ and another type parameter $\gamma_i$ that reflects the heterogeneity of producer $i$. We denote $\bm{\gamma}=(\gamma_0,...,\gamma_{n-1})$. In our game, producers are assumed to be capacitated, i.e., producer $i$ cannot contribute more resources than its capacity limit $\bar{x}_i$, i.e., $x_i \le \bar{x}_i$. Both the type parameters $\gamma_i, \theta_j$ and the capacity limit $\bar{x}_i$ are private information unknown to the coalition coordinator a priori. This coordinator makes the transfer payment $p_i$ to producer $i$ and $p_j$ to consumer $j$. Participants' preference is represented by quasi-linear utilities $u_i = p_i - c_i, \, i\in N$ and $u_j = p_j + v_j, \, j\in M$.
We use \emph{social surplus} $S(\bm{x},\bm{\gamma})=\sum_{j=n}^{n+m-1} v(\bm{x},\theta_j)-\sum_{i=0}^{n-1} c(x_i,\gamma_i)$ to measure the social effect of the production coalition. Social surplus maximization implies \emph{Pareto efficiency} (or \emph{allocative efficiency} in the language of mechanism design).


Parameters in our model have concrete meanings in scenarios such as federated learning (FL) and crowd computing. In FL, the input resources are data. $x_i$ measures the size and quality of the contributed dataset. (Measuring the quality of datasets is a separate problem. We do not go deeper here.) The capacity limit $\bar{x}_i$ is the best dataset owned by participant $i$. $c_i$ is the cost of collecting and cleaning data. $v_i$ is the value of the federated model to each participant (e.g., for the use case that banks use FL to train AI models to predict credit risk, $v_i$ is calculated as the reduced bad debt rate times the principle of loans). In crowd computing, $x_i$ is the contributed computing power, $c_i$ costs of computing power (e.g., electricity costs, hardware costs, etc), and $v_j$ is the value of the computation result (e.g., the value of protein folding result to drug development).

We put forward five assumptions on the individual valuation function $v(\bm{x},\theta_j)$ and the individual cost function $c(x_i, \gamma_i)$. Assumption \ref{assumption:smooth}-\ref{assumption:zeroContribution} are used for proving truthfulness and alloactive efficiency. Assumption \ref{assumption:superAdditivity}-\ref{assumption:cremer} are used for proving individual rationality and ex-post weak budget balancedness.

\newtheorem{assumptionStart}{Assumption}
\begin{assumptionStart}[Smoothness and monotocity] \label{assumption:smooth}
    The individual valuation function $v(\bm{x}, \theta_j)$ is a smooth and monotonic increasing function of $\bm{x}$ and $\theta_j$.
    The individual cost function $c(x_i, \gamma_i)$ is a smooth and monotonic increasing function of $x_i$ and $\gamma_i$.
\end{assumptionStart}

\newtheorem{assumption2}[assumptionStart]{Assumption}
\begin{assumption2}[Zero input resource makes no difference] \label{assumption:zeroContribution}
    If $x_i = \bm{0}$, then:
    (1) participant $i$ makes no difference to the value of the output product, i.e.,
    $v(\bm{x}, \theta_j) = v((\bm{0},\bm{x}_{-i}),\theta_j) = v(\bm{x}_{-i},\theta_j), \forall j, \theta_j$;
    (2) participant $i$ bears no cost, i.e.,
    $c(x_i,\gamma_i) =c(\bm{0},\gamma_i) = 0, \, \forall \gamma_i$.
\end{assumption2}

\newtheorem{assumption3}[assumptionStart]{Assumption}
\begin{assumption3}[Super additivity] 
    \label{assumption:superAdditivity}
    The individual valuation function $v(\bm{x}, \theta_j)$ is super additive with respect to $x_i, i=0,\ldots,n-1$, i.e.,
    \begin{equation}
        \label{equation:superAdditivity}
       v(\bm{x}, \theta_j)
       \ge
       \sum_{i=0}^{n-1}v(x_i, \theta_j),
        \forall \bm{x}, \theta_j.
    \end{equation}
\end{assumption3}

\newtheorem{assumption4}[assumptionStart]{Assumption}
\begin{assumption4}[Decreasing cross marginal returns]
    \label{assumption:decreasingMarginal}
    The marginal return of one producer's input resources decreases when other producers contribute more input resources, i.e.,
    for all $i\in N,j\in M,\theta_j$ and $x_i \ge x_i'$, $\bm{x}_{-i} \ge \bm{x}_{-i}'$,
    the following inequality holds:
    \begin{align}
       \label{equation:decreasingMarginal}
       &v((x_i,\bm{x}_{-i}), \theta_j)
       -
       v((x_i',\bm{x}_{-i}), \theta_j)
       \nonumber \\
       \le
       &
       v((x_i,\bm{x}_{-i}'), \theta_j)
       -
       v((x_i',\bm{x}_{-i}'), \theta_j).
    \end{align}
\end{assumption4}

\newtheorem{assumption5}[assumptionStart]{Assumption}
\begin{assumption5}[Correlated and identifiable valuation types]
    \label{assumption:cremer}
    The prior belief on valuation types, Prior($\bm{\theta}$), is identifiable and correlated, i.e., the identifiability condition \cite{kosenok2008individually} and Cr{\'e}mer-McLean condition \cite{creemer1985optimal} hold for all consumers. 
\end{assumption5}

The economic meaning of super additivity is that cooperative production brings synergies, i.e., the value created by the production coalition is higher than the total value created by independent producers. The rationality of Assumption \ref{assumption:decreasingMarginal} comes from the law of diminishing marginal returns: when many resources have already been involved in a production process, the marginal return brought by an additional unit of input resources decreases. As an example, the following individual valuation function satisfies all these four assumptions:
\begin{equation}\label{eqaution:exampleValueFunction}
    v(\bm{x}, \theta_j) = \theta_j \sqrt{n\sum_{k=0}^{n-1}z(x_k)^2},
\end{equation}
where $z(x_i)$ is an arbitrary smooth and increasing function (e.g., the Cobb-Douglas function).
Assumption \ref{assumption:cremer} is introduced to guarantee full consumer surplus, so that we can derive the income of the coalition $-\sum_{j=n}^{n+m-1}p_j$ from individual valuation functions, i.e., $-\sum_{j=n}^{n+m-1}p_j=\sum_{j=n}^{n+m-1}v(\bm{x}, \theta_j)$, because we have the following theorem:
\newtheorem{theoremstart}{Theorem}
\begin{theoremstart} [Cr{\'e}mer-McLean Theorem]\label{theorem:CremerMcLean}
    There exists an interim individually rational and ex-post budget balanced Bayesian mechanism that extracts full consumer surplus if Prior($\bm{\theta}$) is identifiable and Cr{\'e}mer-McLean condition holds for all consumers.
\end{theoremstart}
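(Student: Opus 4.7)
The plan is to follow the classical Cr\'emer-McLean (1985) construction, combined with the identifiability-based ex-post budget balance refinement of Kosenok and Severinov (2008). The mechanism is a direct revelation game in which each consumer $j$ reports a type $\hat{\theta}_j$, receives the allocation induced by truthful reports, and is charged a payment with two components: a \emph{base extraction} charge equal to the reported valuation $w(y(\bm{x}),\hat{\theta}_j)$, and a \emph{correlation-based correction} $\tau_j(\hat{\bm{\theta}})$ depending on all consumers' reports. The base component alone would already extract the full surplus if types were observable; the corrections $\tau_j$ must discipline misreporting without distorting a truthful consumer's interim utility or violating ex-post budget balance.

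I would require the corrections $\tau_j : \prod_k \Theta_k \to \mathbb{R}$ to satisfy three properties: (a) $\mathbb{E}[\tau_j(\theta_j,\bm{\theta}_{-j}) \mid \theta_j] = 0$ for every $\theta_j$, so that truthful reporting preserves interim IR at the binding level; (b) for every $\hat{\theta}_j \neq \theta_j$, $\mathbb{E}[\tau_j(\hat{\theta}_j,\bm{\theta}_{-j}) \mid \theta_j] \geq K$ with $K$ chosen large enough to dominate any surplus gain from misreporting, which yields Bayesian incentive compatibility; (c) $\sum_{j \in M} \tau_j(\hat{\bm{\theta}}) = 0$ pointwise, which delivers ex-post budget balance.

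Existence of a single-agent $\tau_j$ satisfying (a) and (b) is the standard consequence of the Cr\'emer-McLean condition. Viewing each $P(\bm{\theta}_{-j}\mid\theta_j)$ as a vector indexed by $\prod_{k\neq j}\Theta_k$, the condition asserts that for every $\hat{\theta}_j$ the vector $P(\cdot\mid\hat{\theta}_j)$ lies outside the convex hull of $\{P(\cdot\mid\theta_j):\theta_j \neq \hat{\theta}_j\}$. A separating hyperplane then produces a vector $\tau_j(\hat{\theta}_j,\cdot)$ whose inner product against $P(\cdot\mid\hat{\theta}_j)$ is strictly smaller than against every other $P(\cdot\mid\theta_j)$; because each $P(\cdot\mid\theta_j)$ is a probability vector, a constant shift normalizes the truthful-report expectation to zero, and a common rescaling then produces (a) and (b) simultaneously.

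The main obstacle, and the step where identifiability carries all the weight, is aggregating the $\tau_j$'s so that they sum pointwise to zero rather than merely in expectation. A naive sum gives only ex-ante budget balance, which is why the Cr\'emer-McLean condition alone is insufficient for the statement. Kosenok and Severinov's identifiability condition guarantees that the auxiliary linear system requiring $\sum_j \tau_j \equiv 0$ while preserving (a) and (b) admits a solution, essentially by ruling out the linear dependencies among belief distributions that would otherwise obstruct pointwise redistribution. I would invoke their construction directly at this step and then verify that the composite mechanism simultaneously delivers full surplus extraction, interim individual rationality, and ex-post budget balance, as claimed.
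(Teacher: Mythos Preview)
The paper does not prove this theorem; it simply states that proofs can be found in Cr\'emer and McLean (1985) and Kosenok and Severinov (2008), and points to the constructive proof of Lemma~A3 in the latter for an explicit mechanism. Your sketch is a correct and faithful outline of precisely those constructions: the separating-hyperplane argument under the Cr\'emer--McLean condition to obtain the single-agent correction terms $\tau_j$ satisfying (a) and (b), and the identifiability-based redistribution of Kosenok--Severinov to upgrade ex-ante to ex-post budget balance via (c). In that sense your proposal is considerably more detailed than what the paper itself provides, but it follows the same route the paper defers to.
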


Proofs of Theorem \ref{theorem:CremerMcLean} can be found in \cite{creemer1985optimal} and \cite{kosenok2008individually}. We can construct such a demand-side \emph{Cr{\'e}mer-McLean mechanism} by following the constructive proof of Lemma A3 in \cite{kosenok2008individually} or by automated mechanism design techniques \cite{albert2015assessing}. We will not go into further detail here because we focus on the supply-side game. For theoretical analyses in this paper, we take as a given that the production coalition uses Cr{\'e}mer-McLean mechanism to extract full consumer surplus. This also guarantees the reported valuation type $\hat{\theta}_j$ equals the true valuation type $\theta_j$ for all consumer $j\in M$. Theorem \ref{theorem:CremerMcLean} transfers the ex-post weak budget balance constraint $\sum_{l=0}^{n+m-1}p_l \le 0$ to the following inequality:
\begin{equation}\label{eqaution:WBBgivenCremer}
    \sum_{i=0}^{n-1}p_i \le \sum_{j=n}^{n+m-1}v(\bm{x}, \theta_j).
\end{equation}
This decouples the supply-side mechanism design problem from the demand-side problem.


\section{The Procurement Auction}

As a counterpart of Cr{\'e}mer-McLean mechanism which is optimal on the demand side, we introduce an optimal supply-side procurement auction in this section. This proposed procurement auction, accompnied by the demand-side Cr{\'e}mer-McLean mechanism, maximizes social surplus by incentivizing producers to truthfully report their capacity limits and type parameters. This procurement auction consists of four steps, of which the soundness will be proved in the next section.

\subsection{Step 1. Producers bid on capacity limits and cost types}

As the first step, every producer submits a sealed bid for their respective capacity limits and cost types. The reported capacity limit $\hat{x}_i$ is the maximum resources that producer $i$ is willing to offer to the coalition. It may differ from the true capacity limit $\bar{x}_i$. Similarly, the reported cost type $\hat{\gamma}_i$ may differ from the true cost type $\gamma_i$. 

\subsection{Step 2. The coordinator chooses the optimal acceptance ratios}

Then, the coalition coordinator decides how many input resources to accept from each producer. It chooses $x_i \le \hat{x}_i, i=0,\ldots,n-1$ that maximize the social surplus, constrained by reported capacity limits and based on reported type parameters. Equivalently, the coordinator calculates the optimal \emph{acceptance ratio} $\eta_i \in [0,1]^{\textrm{dim}(x_i)} = x_i \oslash \hat{x}_i$ such that $x_i = \hat{x}_i \odot \eta_i$, where $\odot$ and $\oslash$ denote the element-wise multiplication and division respectively, and $[0,1]$ denotes the interval between $0$ and $1$. The economic meaning of $\eta_i$ is the ratio of input resources accepted by the coalition to those offered by the producer.

The optimal acceptance ratios $(\eta^*_0, \ldots, \eta^*_{n-1})=\bm{\eta}^*$ are calculated according to the following formula:
\begin{align}\label{equation:computeEta}
    &\bm{\eta}^{*} =
     \textrm{argmax}_{
        \bm{\eta} \in [0,1]^{\textrm{dim}(x_i) \times n }
        }
    \{
        S(\hat{\bm{x}}\odot\bm{\eta}, \hat{\bm{\gamma}}, \hat{\bm{\theta}})
    \}
    \\
    &= \mathop{\textrm{argmax}}\limits_{
        \bm{\eta} \in [0,1]^{\textrm{dim}(x_i) \times n }
        }
    \sum_{j=n}^{n+m-1} v(\hat{\bm{x}}\odot\bm{\eta},\theta_j)-\sum_{i=0}^{n-1}c_i(\hat{x}_i \odot \eta_i, \hat{\gamma}_i).
    \nonumber
\end{align}
Because different $(\hat{\bm{x}},\hat{\bm{\gamma}}, \hat{\bm{\theta}})$ results in different $\bm{\eta}^{*}$, $\bm{\eta}^{*}$ is written as $\bm{\eta}^{*}(\hat{\bm{x}},\hat{\bm{\gamma}}, \hat{\bm{\theta}})$. Correspondingly, the maximum social surplus is denoted by
$S^{*}(\hat{\bm{x}},\hat{\bm{\gamma}}, \hat{\bm{\theta}})= \sum_{j=n}^{n+m-1}
v(\hat{\bm{x}}\odot\bm{\eta}^{*}(\hat{\bm{x}},\hat{\bm{\gamma}},\hat{\bm{\theta}}), \hat{\theta}_j) -\sum_{i=0}^{n-1}c_i(\hat{x}_i
\odot \eta_i^{*}(\hat{\bm{x}},\hat{\bm{\gamma}},\hat{\bm{\theta}}), \hat{\gamma}_i)$. It is worth noting that although
$S^{*}(\hat{\bm{x}},\hat{\bm{\gamma}}, \hat{\bm{\theta}})$ and
$S(\bm{x},\bm{\gamma},\bm{\theta})$ both represent social surplus, they are different functions. The first parameter  $\bm{x}$ in $S(\cdot)$ is the accepted input resources, whereas the first parameter  $\hat{\bm{x}}$ in $S^{*}(\cdot)$ is the reported capacity limits. $\bm{x}$ and $\hat{\bm{x}}$ are related by $\bm{x} = \hat{\bm{x}} \odot \bm{\eta}^{*} $.

\subsection{Step 3. Producers contribute accepted input resources to the production coalition}

In this step, producers are required to contribute $\hat{\bm{x}} \odot \bm{\eta}^*$ units of input resources to the production coalition. Since in the first step, producer $i$ has promised to offer at most $\hat{x}_i$ units of input resources, if it cannot contribute $\hat{x}_i \odot \eta^*_i \le \hat{x}_i$, we impose a high punishment on it.
With the contributed input resources, producers collaboratively produce the output virtual product, bringing value $v(\hat{\bm{x}} \odot \bm{\eta}^*,\theta_j)$ to consumer $j\in M$.

\subsection{Step 4. The coordinator makes transfer payments to participants according to the PVCG sharing rule}

In this final step, the coordinator pays producers according to the PVCG sharing rule. The PVCG payment
\begin{equation} \label{equation:totalPayment}
    p_i(\cdot) = \tau_i(\cdot) + h^*_i(\cdot)
\end{equation}
is composed of two parts, the \emph{VCG payment} $\tau_i$ and the \emph{optimal adjustment payment} $h^*_i$. The VCG payment is designed to induce truthfulness, i.e., the reported capacity limits $\hat{\bm{x}}$ and reported cost type $\hat{\bm{\gamma}}$ are equal to the true capacity limits $\bar{\bm{x}}$ and true cost type $\bm{\gamma}$. The adjustment payment is optimized so that ex-post individual rationality and ex-post weak budget balancedness can also be attained.

With $\bm{\eta}^{*}$ calculated in Step 2, the VCG payment $\tau_i$ to producer $i$ is:
\begin{align}\label{equation:computeTau}
    &\tau_i =S^{*}(\hat{\bm{x}},\hat{\bm{\gamma}}, \hat{\bm{\theta}}_{-i})
    -S^{*}_{-i}(\hat{\bm{x}}_{-i},\hat{\bm{\gamma}}_{-i},\hat{\bm{\theta}})
    \nonumber \\
    & \quad
    +c(\hat{x}_i\odot\eta_{i}^{*}(\hat{\bm{x}},\hat{\bm{\gamma}},\hat{\bm{\theta}}),\hat{\gamma}_i) \nonumber \\
    & 
    = \sum_{j=n}^{n+m-1}[ v(\hat{\bm{x}}\odot\bm{\eta}^{*}(\hat{\bm{x}},\hat{\bm{\gamma}},\hat{\bm{\theta}}), \hat{\theta}_j)-
    v(\hat{\bm{x}}_{-i}\odot\bm{\eta}^{-i*}(
    \nonumber \\
    & \quad    
    \hat{\bm{x}}_{-i},\hat{\bm{\gamma}}_{-i},\hat{\bm{\theta}}), \hat{\theta}_j)]
   - \sum_{k=0, \neq i}^{n-1}[c(\hat{x}_k \odot \eta_{k}^{*}(\hat{\bm{x}},\hat{\bm{\gamma}},\hat{\bm{\theta}}),\hat{\gamma}_k)
     \nonumber \\
    & \quad    
    - c(\hat{x}_k \odot \eta_{k}^{-i*}(\hat{\bm{x}}_{-i},\hat{\bm{\gamma}}_{-i},\hat{\bm{\theta}}),\hat{\gamma}_k)], 
\end{align}
where $(\hat{\bm{x}}_{-i},\hat{\bm{\gamma}}_{-i})$ denotes the reported capacity limits and the reported cost types excluding producer $i$. $\bm{\eta}^{-i*}$ and $S^{*}_{-i}(\hat{\bm{x}}_{-i},\hat{\bm{\gamma}}_{-i},\hat{\bm{\theta}})$ are the corresponding optimal acceptance ratios and maximum social surplus. Note that $\bm{\eta}^{-i*}$ is different from $\bm{\eta}^{*}_{-i}$: the former maximizes $S(\hat{\bm{x}}_{-i} \odot\bm{\eta}_{-i},\hat{\bm{\gamma}}_{-i},\hat{\bm{\theta}})$, whereas the latter is the component of $\bm{\eta}^{*}$ that maximizes
$S(\hat{\bm{x}} \odot \bm{\eta},\hat{\bm{\gamma}},\hat{\bm{\theta}})$. $\bm{\tau} = (\tau_0, \ldots, \tau_{n-1})$ is a function of $(\hat{\bm{x}},\hat{\bm{\gamma}},\hat{\bm{\theta}})$, written as $\bm{\tau}(\hat{\bm{x}},\hat{\bm{\gamma}},\hat{\bm{\theta}})$. 

The \emph{adjustment payment} $h_i(\hat{\bm{x}}_{-i}, \hat{\bm{\gamma}}_{-i},\hat{\bm{\theta}})$ is a function of $(\hat{\bm{x}}_{-i}, \hat{\bm{\gamma}}_{-i},\hat{\bm{\theta}})$. 
The optimal adjustment payments $(h^*_0 (\cdot),\ldots,h^*_{n-1} (\cdot))=\bm{h}^*(\cdot)$ are determined by solving the following \emph{functional equation} (a type of equation in which the unknowns are functions instead of variables; refer to \cite{rassias2012functional} for more details):
\begin{align} \label{equation:calculateAdjustment}
    & \sum_{i=0}^{n-1} \textrm{ReLu}[
        - (S^{*}(\bm{x},\bm{\gamma},\bm{\theta})
        -S^{*}_{-i}(\bm{x}_{-i},\bm{\gamma}_{-i},\bm{\theta}))
    \nonumber \\
    &
    -h_i(\bm{x}_{-i},\bm{\gamma}_{-i},\bm{\theta}) 
    ]
    + \textrm{ReLu}
    \{ \sum_{i=0}^{n-1}[(S^{*}(\bm{x},\bm{\gamma},\bm{\theta})
    \nonumber
    \\
    &
    -S^{*}_{-i}(
        \bm{x}_{-i},\bm{\gamma}_{-i},\bm{\theta}))
    + h_i(\bm{x}_{-i},\bm{\gamma}_{-i},\bm{\theta}) ]
    \\
    &
    -S^{*}(\bm{x},\bm{\gamma},\bm{\theta})\}
    \equiv 0, \, \forall
    (\bar{\bm{x}},\bm{\gamma},\bm{\theta}) \in \textrm{supp}(\textrm{Prior}(\bm{x},\bm{\gamma}, \bm{\theta})),
    \nonumber
\end{align}
where $\textrm{supp}(\textrm{Prior}(\bm{x},\bm{\gamma}, \bm{\theta}))$ is the \emph{support} of the \emph{prior distribution} $\textrm{Prior}(\bm{x},\bm{\gamma}, \bm{\theta})$ of the true parameters $(\bm{x},\bm{\gamma}, \bm{\theta})$ estimated by the coalition coordinator. Support is a terminology from \emph{measure theory}, defined by
    $\textrm{supp}(\textrm{Prior}(\bm{x},\bm{\gamma}, \bm{\theta}))
    = \{(\bm{x},\bm{\gamma}, \bm{\theta})|\textrm{Prior}(\bm{x},\bm{\gamma}, \bm{\theta})>0\}$. 
In general, there is no closed-form solution to Eq. \ref{equation:calculateAdjustment}, so we employ neural network techniques to learn the solution. We will go into more details later.

\section{Theoretical Analyses} \label{sec:theory}
Theoretical analyses in this section are organized through the following strand:
\begin{enumerate}
    \item First (in Proposition \ref{proposition:IC} and Proposition \ref{proposition:SSM}), we prove that for arbitrary $h_i(\hat{\bm{x}}_{-i},\hat{\bm{\gamma}}_{-i},\hat{\bm{\theta}})$, $i=0,\ldots,n-1$, the PVCG payments guarantee supply-side truthfulness (dominant strategy incentive compatibility) and maximize social surplus (allocative efficiency).
    \item Second (in Proposition \ref{proposition:IR} and Proposition \ref{proposition:WBB}), given that Cr{\'e}mer-McLean Theorem (Theorem \ref{theorem:CremerMcLean}) holds on the demand side, we derive two ineqaulity cosntraints on the adjustment payment $h_i(\hat{\bm{x}}_{-i},\hat{\bm{\gamma}}_{-i},\hat{\bm{\theta}})$, which are sufficient and necessary conditions for ex-post individual rationality and ex-post weak budget balancedness. We show that these constraints can be transformed into Eq. \ref{equation:calculateAdjustment}, which is equivalent to a minimization problem that can be solved by neural network methods.
    \item Lastly (in Theorem \ref{theorem:WBBIRtogether} and Corollary \ref{corollary:superAdditivityLeadsToCorollary2}), we prove the existence of at least one solution $\bm{h}^*(\cdot)$ to Eq. \ref{equation:calculateAdjustment}. The PVCG payment corresponding to this solution attains truthfulness, ex-post allocative efficiency, ex-post individual rationality, and ex-post weak budget balancedness simultaneously on the supply side.
\end{enumerate}

First, we prove that for arbitrary $h_i(\hat{\bm{x}}_{-i},\hat{\bm{\gamma}}_{-i},\hat{\bm{\theta}})$, the payment
$p_i(\cdot) = \tau_i(\cdot) + h_i(\cdot)$ encourages all producers to report their capacity limits and cost types truthfully.

\newtheorem{propositionStart}{Proposition}
\begin{propositionStart}[Dominant strategy incentive compatibility] \label{proposition:IC}
    For every producer $i$, truthfully reporting its capacity limit $\bar{x}_i$ and cost type $\gamma_i$ is its dominant strategy, i.e.,
    \begin{align}\label{equation:DIC}
       & p_i(
            (\bar{x}_i,\hat{\bm{x}}_{-i}),
            (\gamma_i, \hat{\bm{\gamma}}_{-i}),
            \hat{\bm{\theta}}
            )
        \nonumber \\
        & - c(
                \bar{x}_i \odot \eta^{*}_i(
                (\bar{x}_i,\hat{\bm{x}}_{-i}),
            (\gamma_i, \hat{\bm{\gamma}}_{-i}),
            (\theta_i, \hat{\bm{\theta}}_{-i})
            ),
                \gamma_i
        )
                \nonumber \\
        & 
        \ge
        p_i
            (\hat{\bm{x}},\hat{\bm{\gamma}}, \hat{\bm{\theta}})
        - c_i(
            \hat{x}_i \odot \eta^{*}_i
            (\hat{\bm{x}},\hat{\bm{\gamma}},\hat{\bm{\theta}}),
            \gamma_i
        )
         \nonumber \\
        & 
        \quad , \forall i\in N, x_i,\hat{\bm{x}},
        \gamma_i,\hat{\bm{\gamma}},
        \hat{\bm{\theta}}.
    \end{align}
\end{propositionStart}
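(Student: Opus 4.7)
The plan is to apply the standard VCG-pivot argument, adapted to the two-dimensional misreport $(\hat{x}_i,\hat{\gamma}_i)$. Two simplifications at the outset: first, the adjustment payment $h_i(\hat{\bm{x}}_{-i},\hat{\bm{\gamma}}_{-i},\hat{\bm{\theta}})$ does not depend on producer $i$'s own report, so from $i$'s perspective it is a constant and can be dropped from the comparison. Second, the delivery-punishment clause in Step~3 makes any report $\hat{x}_i>\bar{x}_i$ strictly worse than truthful reporting (the realized utility becomes arbitrarily negative), so attention can be restricted to reports with $\hat{x}_i\le\bar{x}_i$.

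Next, I unpack $\tau_i-c(x_i,\gamma_i)$ where $x_i=\hat{x}_i\odot\eta_i^*(\hat{\bm{x}},\hat{\bm{\gamma}},\hat{\bm{\theta}})$. The term $-c(\hat{x}_i\odot\eta_i^*,\hat{\gamma}_i)$ hidden inside $S^*$ is cancelled by the explicit $+c(\hat{x}_i\odot\eta_i^*,\hat{\gamma}_i)$ in Eq.~\ref{equation:computeTau}; combining with the true-cost subtraction $-c(x_i,\gamma_i)$ yields
\begin{equation*}
    \tau_i-c(x_i,\gamma_i) \;=\; \tilde{S}\bigl(\hat{\bm{x}}\odot\bm{\eta}^*(\hat{\bm{x}},\hat{\bm{\gamma}},\hat{\bm{\theta}})\bigr) - S^*_{-i}(\hat{\bm{x}}_{-i},\hat{\bm{\gamma}}_{-i},\hat{\bm{\theta}}),
\end{equation*}
where $\tilde{S}(\bm{x})=\sum_{j=n}^{n+m-1}v(\bm{x},\hat{\theta}_j)-c(x_i,\gamma_i)-\sum_{k\neq i}c(x_k,\hat{\gamma}_k)$ is the \emph{true} social surplus from $i$'s perspective (with $i$'s own cost evaluated at the actual type $\gamma_i$). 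Since $S^*_{-i}$ is independent of $(\hat{x}_i,\hat{\gamma}_i)$, the problem collapses to maximizing $\tilde{S}\bigl(\hat{\bm{x}}\odot\bm{\eta}^*(\hat{\bm{x}},\hat{\bm{\gamma}},\hat{\bm{\theta}})\bigr)$ over $(\hat{x}_i,\hat{\gamma}_i)$ subject to $\hat{x}_i\le\bar{x}_i$.

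The last step is to observe that the truthful report $(\bar{x}_i,\gamma_i)$ attains the global maximum of $\tilde{S}$ over the feasible allocation set $\{\bm{x}:\,x_i\le\bar{x}_i,\;x_k\le\hat{x}_k\;\text{for }k\neq i\}$. Under truthful reporting, $S^*$ and $\tilde{S}$ coincide on this set (both use $\gamma_i$), and by Eq.~\ref{equation:computeEta} the coordinator's $\bm{\eta}^*$ maximizes $S^*$, hence also $\tilde{S}$, over it. For any misreport with $\hat{x}_i\le\bar{x}_i$, the induced allocation $\hat{\bm{x}}\odot\bm{\eta}^*$ remains feasible for the truthful problem (since $x_i\le\hat{x}_i\le\bar{x}_i$), so its $\tilde{S}$-value cannot exceed the truthful maximum. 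This yields Eq.~\ref{equation:DIC}.

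The main obstacle I foresee is bookkeeping: the two dimensions of the report interact through the single maximizer $\bm{\eta}^*(\hat{\bm{x}},\hat{\bm{\gamma}},\hat{\bm{\theta}})$, so the argument must treat the joint deviation rather than each dimension separately. The cleanest route, as sketched, is to collapse $i$'s utility (modulo constants) to $\tilde{S}$ evaluated at the coordinator's chosen allocation and then argue by feasibility-set containment. No regularity from Assumptions~\ref{assumption:smooth}--\ref{assumption:zeroContribution} is needed here; the argument is purely algebraic, mirroring the canonical Groves-mechanism proof with the reported capacity limit playing the role of the usual allocation-space constraint.
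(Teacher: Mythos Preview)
Your approach is essentially the paper's: both collapse producer~$i$'s utility to (true) social surplus minus a term independent of $i$'s report, then invoke the optimality of the coordinator's allocation under truthful reporting via a feasibility argument. Your $\tilde{S}$ framing makes the Groves alignment explicit, whereas the paper plugs a specific $\bm{\eta}=(\hat{x}_i\odot\eta^*_i\oslash\bar{x}_i,\,\bm{\eta}^*_{-i})$ into the definition of $S^*$; these are the same step in different notation.

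One point needs tightening: the punishment clause does \emph{not} rule out all reports with $\hat{x}_i>\bar{x}_i$. It only fires when the \emph{accepted} amount $\hat{x}_i\odot\eta^*_i(\hat{\bm{x}},\hat{\bm{\gamma}},\hat{\bm{\theta}})$ exceeds $\bar{x}_i$; a producer may over-report capacity yet escape punishment if the coordinator happens to accept no more than $\bar{x}_i$. The paper accordingly splits cases on $\hat{x}_i\odot\eta^*_i\lessgtr\bar{x}_i$, not on $\hat{x}_i\lessgtr\bar{x}_i$. This does not break your argument, since your feasibility step only needs $x_i=\hat{x}_i\odot\eta^*_i\le\bar{x}_i$ (which is exactly the non-punishment case), not the stronger $\hat{x}_i\le\bar{x}_i$; just correct the case split and the chain $x_i\le\hat{x}_i\le\bar{x}_i$ to the direct $x_i=\hat{x}_i\odot\eta^*_i\le\bar{x}_i$.
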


\begin{proof}
    We differentiate the cases where $\hat{x}_i \odot \eta^*_i(\hat{\bm{x}},\hat{\bm{\gamma}},\hat{\bm{\theta}}) > \bar{x}_i$ with those where $\hat{x}_i \odot \eta^*_i (\hat{\bm{x}},\hat{\bm{\gamma}},\hat{\bm{\theta}})\le \bar{x}_i$.

    (1) When $\hat{x}_i \odot \eta^*_i(\hat{\bm{x}},\hat{\bm{\gamma}},\hat{\bm{\theta}}) > \bar{x}_i$,
    \\
    it is impossible for producer $i$ to contribute the accepted amount of input resources $\hat{x}_i \odot \eta^*_i(\hat{\bm{x}},\hat{\bm{\gamma}},\hat{\bm{\theta}})$ because this exceeds its capacity limit. In this case, producer $i$ suffers a high punishment, i.e., $p_i
            (\hat{\bm{x}},\hat{\bm{\gamma}}, \hat{\bm{\theta}})$ is a large negative number.
    Hence, the right side of Eq. \ref{equation:DIC} becomes extremely negative and Eq. \ref{equation:DIC} holds. This shows a rational producer will not choose $\hat{x}_i$ such that $\hat{x}_i \odot \eta^*_i(\hat{\bm{x}},\hat{\bm{\gamma}},\hat{\bm{\theta}}) > \bar{x}_i$.

    (2) When $\hat{x}_i \odot \eta^*_i (\hat{\bm{x}},\hat{\bm{\gamma}},\hat{\bm{\theta}}) \le \bar{x}_i$,
    \\
    we aim to prove that by truthfully reporting $\bar{x}_i$, $\gamma_i$, and $\theta_i$, the utility of producer $i$ is at least the same as before.

    By definition, \\
    $S^{*}(\hat{\bm{x}},\hat{\bm{\gamma}},\hat{\bm{\theta}}) =
    \textrm{max}_{
        \bm{\eta} \in [0,1]^{\textrm{dim}(x_i) \times n }
        }
    \{
        \sum_{j=n}^{n+m-1}v(\hat{\bm{x}}\odot\bm{\eta},\hat{\theta}_j)- \sum_{i=0}^{n-1}c(\hat{x}_i \odot \eta_i, \hat{\gamma}_i)
    \}, \forall \hat{\bm{x}},\hat{\bm{\gamma}},\hat{\bm{\theta}}$. We substitute $(\hat{\bm{x}},\hat{\bm{\gamma}},\hat{\bm{\theta}})$ with $((\bar{x}_i,\hat{\bm{x}}_{-i}),(\gamma_i, \hat{\bm{\gamma}}_{-i}),\hat{\bm{\theta}})$ and get
    \begin{align}\label{equation:DICproof1}
    &
        \quad
        S^{*}(
            (\bar{x}_i,\hat{\bm{x}}_{-i}),
            (\gamma_i, \hat{\bm{\gamma}}_{-i}),\hat{\bm{\theta}}
        )
        \nonumber \\
        &
        \ge
        \sum_{j=n}^{n+m-1}
        v((\bar{x}_i,\hat{\bm{x}}_{-i})\odot\bm{\eta},\theta_j)
        -c(\bar{x}_i \odot \eta_i, \gamma_i)
        \nonumber \\
        & \quad
        -  \sum_{k=0, \ne i}^{n}
        c(\hat{x}_k \odot \eta_k, \hat{\gamma}_k)
        , \, \forall \bm{\eta}
        \in
        [0,1]^{\textrm{dim}(x_i) \times n}.
    \end{align}
    In particular, for
    $\bm{\eta} =
    (\hat{x}_i \odot \eta^{*}_i
    (\hat{\bm{x}},\hat{\bm{\gamma}},\hat{\bm{\theta}})
    \oslash \bar{x}_i,
    \bm{\eta}^{*}_{-i}
    (\hat{\bm{x}},\hat{\bm{\gamma}},\bm{\theta})
    )
    \in
    [0,1]^{\textrm{dim}(x_i) \times n}$, Eq. \ref{equation:DICproof1} holds. Therefore,
    \begin{align}\label{equation:DICproof2}
    &  S^{*}  (
            (\bar{x}_i,\hat{\bm{x}}_{-i}),
            (\gamma_i, \hat{\bm{\gamma}}_{-i}),\hat{\bm{\theta}}
        )
        \nonumber \\
    \ge &
        \sum_{j =n}^{n+m-1} [v(
            (\bar{x}_i,\hat{\bm{x}}_{-i}) \odot (\hat{x}_i \odot \eta^{*}_i
            (\hat{\bm{x}},\hat{\bm{\gamma}},\hat{\bm{\theta}})
            \oslash \bar{x}_i
              \nonumber \\
    & ,
            \bm{\eta}^{*}_{-i}
            (\hat{\bm{x}},\hat{\bm{\gamma}},\hat{\bm{\theta}})
            )
        ,\hat{\theta}_j)
        - \sum_{k=0, \ne i}^{n-1}  c(
            \hat{x}_k \odot \eta^{*}_k(\hat{\bm{x}},
            \hat{\bm{\gamma}},\hat{\bm{\theta}}),
            \hat{\gamma}_k
         )
        \nonumber \\
    &
        - c(
            \bar{x}_i \odot
            \hat{x}_i \odot \eta^{*}_i
            (\hat{\bm{x}},\hat{\bm{\gamma}},\hat{\bm{\theta}})
            \oslash \bar{x}_i,
            \gamma_i
        )
    \nonumber     \\
    = & S^{*}  (\hat{\bm{x}},\hat{\bm{\gamma}},\hat{\bm{\theta}})
    \\
    &
    + c(
        \hat{x}_i \odot \eta^{*}_i
        (\hat{\bm{x}},\hat{\bm{\gamma}},\hat{\bm{\theta}}),
        \hat{\gamma}_i
    )
    - c(
        \hat{x}_i \odot \eta^{*}_i
        (\hat{\bm{x}},\hat{\bm{\gamma}},\hat{\bm{\theta}}),
        \gamma_i
    ).
    \nonumber
    \end{align}
    Adding $h_i(\hat{\bm{x}}_{-i},\hat{\bm{\gamma}}_{-i},\hat{\bm{\theta}}_{-i})
    - S^{*}_{-i}(\hat{\bm{x}}_{-i},
    \hat{\bm{\gamma}}_{-i},\hat{\bm{\theta}}_{-i})$
    to both sides of Eq. \ref{equation:DICproof2} and substituting
    $p_i(\cdot) + v(\cdot) - c(\cdot)=S^{*}(\cdot) - S_{-i}^{*}(\cdot) + h(\cdot)$, we get Eq. \ref{equation:DIC}.
\end{proof}

From Proposition \ref{proposition:IC}, we know reported parameters equal true parameters, i.e., $(\hat{\bm{x}}, \hat{\bm{\gamma}},\hat{\bm{\theta}}) = (\bar{\bm{x}}, \bm{\gamma},\bm{\theta})$. Therefore, we can use these two sets of parameters interchangeably in the remaining parts of this paper.

\newtheorem{proposition2}[propositionStart]{Proposition}
\begin{proposition2}[Ex-post social surplus maximization / allocative efficiency] \label{proposition:SSM}
    PVCG maximizes social surplus ex post.
\end{proposition2}

\begin{proof}
    Suppose $\bm{x}^{**} = \textrm{argmax}_{\bm{x} \le \bar{\bm{x}}} \{ S(\bm{x},\bm{\gamma}, \bm{\theta})\}$
    and $S^{**} = S(\bm{x}^{**},\bm{\gamma},\bm{\theta})$.
    We aim to prove that PVCG results in ex-post social surplus no less than $S^{**}$.

    By definition of $\bm{\eta}^{*}(\hat{\bm{x}}, \hat{\bm{\gamma}},\hat{\bm{\theta}})$,
    \begin{equation} \label{equation: propositionSSMproof1}
    S(\hat{\bm{x}} \odot \bm{\eta}^{*}(\hat{\bm{x}}, \hat{\bm{\gamma}},\hat{\bm{\theta}}), \hat{\bm{\gamma}}, \hat{\bm{\theta}}) \ge S(\hat{\bm{x}} \odot \bm{\eta}, \hat{\bm{\gamma}}, \hat{\bm{\theta}}),\, \forall \bm{\eta}.
    \end{equation}
    Incentive compatibility guarantees $\hat{\bm{x}} = \bar{\bm{x}}$, $\hat{\bm{\gamma}}=\bm{\gamma}$, and $\hat{\bm{\theta}}=\bm{\theta}$.
    Therefore,
    \begin{equation} \label{equation: propositionSSMproof2}
    S(\hat{\bm{x}} \odot \bm{\eta}^{*}(\hat{\bm{x}}, \hat{\bm{\gamma}},\hat{\bm{\theta}}), \bm{\gamma},\bm{\theta}) \ge S(\bar{\bm{x}} \odot \bm{\eta}, \bm{\gamma},\bm{\theta}),\, \forall \bm{\eta}.
    \end{equation}
    Particularly, Eq. \ref{equation: propositionSSMproof2} holds for
    $\bm{\eta} = \bm{x}^{**} \oslash \bar{\bm{x}} \in [0,1]^{\textrm{dim}(x_i) \times n}$, i.e.,
    \begin{align} \label{equation: propositionSSMproof3}
    S(\hat{\bm{x}} \odot \bm{\eta}^{*}(\hat{\bm{x}}, \hat{\bm{\gamma}},\hat{\bm{\theta}}), \bm{\gamma},\bm{\theta})
    & \ge
    S(\bar{\bm{x}} \odot \bm{x}^{**} \oslash \bar{\bm{x}}, \bm{\gamma},\bm{\theta})
    \nonumber \\
    & =
    S(\bm{x}^{**}, \bm{\gamma},\bm{\theta})
    = S^{**}.
    \end{align}
    The left side of Eq. \ref{equation: propositionSSMproof3} is the ex-post social surplus achieved by PVCG, while the right side is the maximum social surplus across all possible $\bm{x} \le \bar{\bm{x}}$, given $(\bar{\bm{x}},\bm{\gamma},\bm{\theta})$. Hence, the right side is also no less than the left side. Therefore, the left side equals the right side.
\end{proof}

Furthermore, given that incentive compatibility has been proved in Proposition \ref{proposition:IC} and Theorem \ref{theorem:CremerMcLean}, the following two propositions provide sufficient and necessary conditions for ex-post individual rationality and ex-post weak budget balancedness.

\newtheorem{proposition3}[propositionStart]{Proposition}
\begin{proposition3}[Condition for ex-post individual rationality] \label{proposition:IR}
    PVCG is ex-post individual rational (IR) for all producers i.f.f.
    the true capacity limits $\bar{\bm{x}}$, the true type parameters $\bm{\gamma},\bm{\theta}$, and the adjustment payments $\bm{h}(\cdot)$ satisfy
    \begin{align}\label{equation:IRpremise}
         h_i & (
             \bar{\bm{x}}_{-i},
             \bm{\gamma}_{-i},
             \bm{\theta}
         ) 
        \\
        & 
         \ge
         - [
             S^{*}(\bar{\bm{x}},
             \bm{\gamma},\bm{\theta}) - S^{*}_{-i} (
                 \bar{\bm{x}}_{-i},
                 \bm{\gamma}_{-i},
                 \bm{\theta}
             )
         ], \, \forall i \in N.
         \nonumber
    \end{align}
\end{proposition3}

\begin{proof}
    According to truthfulness proved in Proposition \ref{proposition:IC} and Theorem \ref{theorem:CremerMcLean},
    we use $\bar{\bm{x}}, \bm{\gamma}, \bm{\theta}$ to substitute $\hat{\bm{x}},\hat{\bm{\gamma}}, \hat{\bm{\theta}}$ in Eq. \ref{equation:totalPayment} and
    Eq. \ref{equation:computeTau}.
    Then, the ex-post utility of producer $i$ becomes
    \begin{align}\label{equation:IRproof}
    & u_i(\bar{\bm{x}},\bm{\gamma},\bm{\theta}) 
    =p_i(\bar{\bm{x}},\bm{\gamma},\bm{\theta})
    - c(\bar{x}_i \odot \eta_{i}^{*}(\bar{\bm{x}},\bm{\gamma},\bm{\theta}),\gamma_i)
    \\
    &=S^{*}(\bar{\bm{x}},\bm{\gamma},\bm{\theta}) -S^{*}_{-i}(\bar{\bm{x}}_{-i},\bm{\gamma}_{-i},\bm{\theta})
        +h_i(\bar{\bm{x}}_{-i},\bm{\gamma}_{-i},\bm{\theta}) .
    \nonumber
    \end{align}
    Ex-post IR requires $u_i(\bar{\bm{x}},\bm{\gamma},\bm{\theta}) \ge 0, \forall i$, which is equivalent to the inequality in Eq. \ref{equation:IRpremise}.
\end{proof}

\newtheorem{proposition4}[propositionStart]{Proposition}
\begin{proposition4}[Condition for ex-post weak budget balancedness] \label{proposition:WBB}
    PVCG is ex-post weakly budget balanced (WBB) on the supply side i.f.f.
    the true capacity limits $\bar{\bm{x}}$, the true type parameters $\bm{\gamma}, \bm{\theta}$, and the adjustment payments $\bm{h}(\cdot)$ satisfy
    \begin{align}\label{equation:WBBpremise}
        \sum_{i=0}^{n-1} &h_i(\bar{\bm{x}}_{-i},\bm{\gamma}_{-i},\bm{\theta}) \le S^{*}(\bar{\bm{x}},\bm{\gamma},\bm{\theta})
        \nonumber \\
        & 
        -\sum_{i=0}^{n-1}[S^{*}(\bar{\bm{x}},\bm{\gamma},\bm{\theta}) -S^{*}_{-i}(\bar{\bm{x}}_{-i},\bm{\gamma}_{-i},\bm{\theta})].
    \end{align}
\end{proposition4}

\begin{proof}
    According to truthfulness proved in Proposition \ref{proposition:IC} and Theorem \ref{theorem:CremerMcLean},
    we use $\bar{\bm{x}}, \bm{\gamma}, \bm{\theta}$ to substitute $\hat{\bm{x}},\hat{\bm{\gamma}}, \hat{\bm{\theta}}$ in  Eq. \ref{equation:totalPayment} and
    Eq. \ref{equation:computeTau}.
    Then, the ex-post total payment to all producers is
    \begin{align}\label{equation:WBBproof}
        &\sum_{i=0}^{n-1}p_i(\bar{\bm{x}},\bm{\gamma},\bm{\theta}) = \sum_{i=0}^{n-1}[\tau_i(\bar{\bm{x}},\bm{\gamma},\bm{\theta})
        +h_i(\bar{\bm{x}}_{-i},\bm{\gamma}_{-i},\bm{\theta})]
        \nonumber \\
        &=\sum_{i=0}^{n-1}[S^{*}(\bar{\bm{x}},\bm{\gamma},\bm{\theta}) -S^{*}_{-i}(\bar{\bm{x}}_{-i},\bm{\gamma}_{-i},\bm{\theta})
        \nonumber \\
        &
        + c(\bar{x}_i \odot \eta_{i}^{*}(\bar{\bm{x}},\bm{\gamma},\bm{\theta}),\gamma_i)
        +h_i(\bar{\bm{x}}_{-i},\bm{\gamma}_{-i},\bm{\theta})].
    \end{align}
    As explained before, Theorem \ref{theorem:CremerMcLean} transforms the WBB condition to $\sum_{i=0}^{n-1}p_i(\bar{\bm{x}},\bm{\gamma},\bm{\theta}) \le \sum_{j=n}^{n+m-1}v(\bar{\bm{x}}\odot\bm{\eta}^{*}(\bar{\bm{x}},\bm{\gamma},\bm{\theta}),\theta_j)$.
    This inequality, together with the definition
    $S^{*}(\bar{\bm{x}},\bm{\gamma},\bm{\theta})=
    \sum_{j=n}^{n+m-1}v(\bar{\bm{x}}\odot\bm{\eta}^{*}(\bar{\bm{x}},\bm{\gamma},\bm{\theta}),\theta_j)-\sum_{i=0}^{n-1}c(\bar{x}_i
    \odot \eta_i^{*}(\bar{\bm{x}},\bm{\gamma},\bm{\theta}), \gamma_i)$,
    transforms Eq. \ref{equation:WBBproof} to Eq. \ref{equation:WBBpremise}.
\end{proof}

Then, we prove Eq. \ref{equation:calculateAdjustment} is a sufficient and necessary confition for Proposition \ref{proposition:IR} and Proposition \ref{proposition:WBB}.

\newtheorem{corollaryStart}{Corollary}
\begin{corollaryStart} \label{corollary:IRWBBequation}
    Under PVCG, a sufficient and necessary condition for ex-post IR and ex-post WBB to coexist on the supply side is
    \begin{equation}\label{equation:IRWBBtogetherLoss}
       \textrm{LOSS} = \textrm{Loss1} + \textrm{Loss2} = 0,
    \end{equation}
    where
    \begin{align} \label{equation:minimizeIR}
        & \textrm{Loss1} = \sum_{i=0}^{n-1} \textrm{ReLu}[
            - (S^{*}(\bar{\bm{x}},\bm{\gamma},\bm{\theta})
            -S^{*}_{-i}(\bar{\bm{x}}_{-i},\bm{\gamma}_{-i},\bm{\theta}))
        \nonumber \\
        & \quad \quad  -h_i(\bar{\bm{x}}_{-i},\bm{\gamma}_{-i}, \bm{\theta})
        ] \quad \textrm{and}
        \\
        \label{equation:minimizeWBB}
        & \textrm{Loss2} = \textrm{ReLu}
        [\sum_{i=0}^{n-1}[(S^{*}(\bar{\bm{x}},\bm{\gamma},\bm{\theta})
         -S^{*}_{-i}(\bar{\bm{x}}_{-i},\bm{\gamma}_{-i},\bm{\theta}))
        \nonumber \\
        & \quad \quad +h_i(\bar{\bm{x}}_{-i},\bm{\gamma}_{-i},\bm{\theta}) ]
        -S^{*}(\bar{\bm{x}},\bm{\gamma},\bm{\theta})].
    \end{align}
\end{corollaryStart}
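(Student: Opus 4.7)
The plan is to show that Corollary \ref{corollary:IRWBBequation} is essentially a bookkeeping reformulation of Proposition \ref{proposition:IR} and Proposition \ref{proposition:WBB}, using the basic property of the ReLu function that $\textrm{ReLu}(z) = \max(z, 0) \ge 0$ with equality if and only if $z \le 0$.

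First I would rewrite the IR condition from Proposition \ref{proposition:IR} in the form $-(S^*(\bar{\bm{x}}, \bm{\gamma}, \bm{\theta}) - S^*_{-i}(\bar{\bm{x}}_{-i}, \bm{\gamma}_{-i}, \bm{\theta})) - h_i(\bar{\bm{x}}_{-i}, \bm{\gamma}_{-i}, \bm{\theta}) \le 0$, holding simultaneously for every $i \in N$. Applying $\textrm{ReLu}$ termwise and summing over $i$, the IR condition is equivalent to $\textrm{Loss1} = 0$, since each summand is a nonnegative ReLu term and a sum of nonnegative quantities vanishes iff each vanishes. Similarly, I would rewrite the WBB condition from Proposition \ref{proposition:WBB} by moving $S^*(\bar{\bm{x}}, \bm{\gamma}, \bm{\theta})$ and the sum of the VCG-externality terms to the same side, obtaining exactly $\sum_{i=0}^{n-1}[(S^*(\bar{\bm{x}},\bm{\gamma},\bm{\theta}) - S^*_{-i}(\bar{\bm{x}}_{-i},\bm{\gamma}_{-i},\bm{\theta})) + h_i(\bar{\bm{x}}_{-i},\bm{\gamma}_{-i},\bm{\theta})] - S^*(\bar{\bm{x}},\bm{\gamma},\bm{\theta}) \le 0$; this is equivalent to $\textrm{Loss2} = 0$ by the same ReLu argument.

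Finally I would combine the two equivalences. Because $\textrm{Loss1} \ge 0$ and $\textrm{Loss2} \ge 0$ pointwise, their sum $\textrm{LOSS} = \textrm{Loss1} + \textrm{Loss2}$ vanishes if and only if both summands vanish. Putting the pieces together, $\textrm{LOSS} = 0$ is equivalent to the simultaneous validity of the inequalities in Eq. \ref{equation:IRpremise} and Eq. \ref{equation:WBBpremise}, which by Propositions \ref{proposition:IR} and \ref{proposition:WBB} is equivalent to ex-post IR and ex-post WBB coexisting on the supply side.

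There is no real obstacle here: the only subtle point is making sure one quantifies over all $(\bar{\bm{x}}, \bm{\gamma}, \bm{\theta}) \in \textrm{supp}(\textrm{Prior}(\bm{x}, \bm{\gamma}, \bm{\theta}))$, so that the identity $\textrm{LOSS} \equiv 0$ encodes the ex-post (rather than merely ex-ante or in-expectation) guarantees; this matches the universal quantifier attached to Eq. \ref{equation:calculateAdjustment}. The remaining work is routine algebraic rearrangement together with the elementary ReLu equivalence.
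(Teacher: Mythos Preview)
Your proposal is correct and follows essentially the same approach as the paper's proof, which tersely notes that $\textrm{Loss1}\ge 0$ and $\textrm{Loss2}\ge 0$, that $\textrm{Loss1}=0$ iff Eq.~\ref{equation:IRpremise} holds, and that $\textrm{Loss2}=0$ iff Eq.~\ref{equation:WBBpremise} holds. Your version simply unpacks the ReLu equivalence and the sum-of-nonnegatives argument more explicitly, and your remark about quantifying over the support is a helpful clarification not spelled out in the paper's one-line proof.
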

\begin{proof}
    $Loss1 \ge 0, Loss2 \ge 0, \forall \bar{\bm{x}}, \bm{\gamma}, \bm{\theta}$. $Loss1 = 0$ i.f.f. Eq. \ref{equation:IRpremise} holds; $Loss2 = 0$ i.f.f. Eq. \ref{equation:WBBpremise} holds. 
\end{proof}

From Corollary \ref{corollary:IRWBBequation}, we know that if the adjustment payments $\bm{h}(\bar{\bm{x}}, \bm{\gamma}, \bm{\theta})$ are such chosen that Eq. \ref{equation:IRWBBtogetherLoss} holds for true capacity limits $\bar{\bm{x}}$ and true type parameters $\bm{\gamma}, \bm{\theta}$, then PVCG attains IR and WBB simultaneously. However, these true parameters are unknown to the coalition coordinator a priori, we need to find the \emph{optimal adjustment payment function} $\bm{h}^*(\cdot)$ such that for all possible $\bar{\bm{x}}$ and $\bm{\gamma}$ drawn from their respective prior distributions, Eq. \ref{equation:IRWBBtogetherLoss} holds. This is equivalent to solve the functional equation in Eq. \ref{equation:calculateAdjustment}. When such a functional solution exists, it minimizes the expected value of $LOSS=Loss1+Loss2$.

\newtheorem{corollaryMinimizeLoss}[corollaryStart]{Corollary}
\begin{corollaryMinimizeLoss} \label{corollary:minimizeLoss}
    The solution $\bm{h}^*(\cdot)$ to Eq. \ref{equation:calculateAdjustment}, if existing, is also a solution to the following minimization problem:
    \begin{equation}\label{equation:minimizeLoss}
        \bm{h}^*(\cdot) = \textrm{argmin}_{\bm{h}(\cdot)} 
            \mathbb{E}_{(\bar{\bm{x}},\bm{\gamma}, \bm{\theta})} 
            \{ \textrm{LOSS} \},
    \end{equation}
    where the expectation is over the prior distribution of $(\bar{\bm{x}},\bm{\gamma},\bm{\theta})$.
\end{corollaryMinimizeLoss}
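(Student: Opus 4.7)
The plan is to exploit the structural observation that $\textrm{LOSS}$ is built entirely from ReLu terms, so it is pointwise non-negative, and to combine this with the fact that a solution to Eq.~\ref{equation:calculateAdjustment} drives this non-negative quantity identically to zero on the support of the prior. Together these give a matching lower bound and achiever, which is exactly what is needed to conclude an argmin claim.

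First I would record the elementary inequality $\textrm{Loss1} \ge 0$ and $\textrm{Loss2} \ge 0$ for every realization of $(\bar{\bm{x}},\bm{\gamma},\bm{\theta})$ and every admissible adjustment-payment function $\bm{h}(\cdot)$, which follows from $\textrm{ReLu}(\cdot) \ge 0$. Taking expectation over the prior of $(\bar{\bm{x}},\bm{\gamma},\bm{\theta})$ preserves this sign, so $\mathbb{E}_{(\bar{\bm{x}},\bm{\gamma},\bm{\theta})}\{\textrm{LOSS}\} \ge 0$ holds uniformly over the class of candidate $\bm{h}(\cdot)$. This provides a universal lower bound for the minimization in Eq.~\ref{equation:minimizeLoss}, with value $0$.

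Next I would invoke the hypothesis that $\bm{h}^{*}(\cdot)$ solves the functional equation Eq.~\ref{equation:calculateAdjustment}. Substituting $\bm{h} = \bm{h}^{*}$, the left-hand side of Eq.~\ref{equation:calculateAdjustment} is precisely $\textrm{Loss1} + \textrm{Loss2} = \textrm{LOSS}$, and the equation demands that it vanish for every $(\bar{\bm{x}},\bm{\gamma},\bm{\theta})$ in $\textrm{supp}(\textrm{Prior}(\bm{x},\bm{\gamma},\bm{\theta}))$. Since the expectation in Eq.~\ref{equation:minimizeLoss} integrates only against mass supported on this set, we get $\mathbb{E}_{(\bar{\bm{x}},\bm{\gamma},\bm{\theta})}\{\textrm{LOSS}(\bm{h}^{*})\} = 0$. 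Combining this with the lower bound from the previous step shows that $\bm{h}^{*}(\cdot)$ attains the infimum and hence lies in $\textrm{argmin}_{\bm{h}(\cdot)} \mathbb{E}_{(\bar{\bm{x}},\bm{\gamma},\bm{\theta})}\{\textrm{LOSS}\}$.

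I do not anticipate any serious obstacle: the argument is essentially a pointwise-to-integral lift of Corollary~\ref{corollary:IRWBBequation}. The only subtlety worth flagging is that the statement asserts that $\bm{h}^{*}(\cdot)$ \emph{is a} solution to the argmin, not that it is unique, which is consistent with the ``if existing'' phrasing and with the fact that any $\bm{h}(\cdot)$ whose $\textrm{LOSS}$ vanishes almost surely under the prior would be an equally valid minimizer.
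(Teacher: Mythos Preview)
Your proposal is correct and follows essentially the same approach as the paper: the paper also observes that $\textrm{LOSS}\ge 0$ pointwise (hence $\mathbb{E}\{\textrm{LOSS}\}\ge 0$ for every $\bm{h}(\cdot)$) and that a solution $\bm{h}^*(\cdot)$ to Eq.~\ref{equation:calculateAdjustment} yields $\textrm{LOSS}\equiv 0$ on the support of the prior, hence $\mathbb{E}\{\textrm{LOSS}\}=0$. Your additional remark that the corollary only asserts membership in the argmin, not uniqueness, is accurate and consistent with the paper's phrasing.
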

\begin{proof}
    For the solution  $\bm{h}^*(\cdot)$ to Eq. \ref{equation:calculateAdjustment}, 
    \\ 
    $LOSS \equiv 0, \, \forall (\bm{x},\bm{\gamma}, \bm{\theta}) \in \textrm{supp}(\textrm{Prior}(\bm{x},\bm{\gamma}, \bm{\theta}))$. Therefore, $\mathbb{E}_{(\bar{\bm{x}},\bm{\gamma}, \bm{\theta} ) } \{ LOSS \} = 0 $. 
    
    For all other function $\bm{h}(\cdot)$, since $LOSS \ge 0, \, \forall (\bar{\bm{x}},\bm{\gamma}, \bm{\theta})$, we have $\mathbb{E}_{(\bar{\bm{x}},\bm{\gamma}, \bm{\theta})} \{ LOSS \} \ge 0$. 
\end{proof}

From Corollary \ref{corollary:minimizeLoss}, we can find the functional solution $\bm{h}^*(\cdot)$ to Eq. \ref{equation:calculateAdjustment} by minimizing the expected $LOSS$. The remaining problem is to prove the existence of such a solution. The following theorem holds.

\newtheorem{theorem1}[theoremstart]{Theorem}
\begin{theorem1} \label{theorem:WBBIRtogether}
    The following inequality is a sufficient and necessary condition for the existence of
    $\bm{h}^*(\cdot)$ such that the PVCG payments $\bm{p}(\cdot)
    = \bm{\tau}(\cdot) + \bm{h}^*(\cdot)$ satisfy WBB and IR
    for all $(\bar{\bm{x}},\bm{\gamma},\bm{\theta}) \in \textrm{supp}(\textrm{Prior}(\bar{\bm{x}},\bm{\gamma}, \bm{\theta}))$.
    \begin{align} \label{equation:IRWBB}
        &\sum_{i=0}^{n-1}[
            S^{*}(
                \bar{\bm{x}},
                \bm{\gamma},\bm{\theta}) - S^{*}(
                    (\min \bar{x}_i,\bm{x}_{-i}),
                    (\max \gamma_i,\bm{\gamma}_{-i})
        ,
            \bm{\theta}
                )
        ]
        \nonumber
                    \\
        &
         \le S^{*}(\bar{\bm{x}},\bm{\gamma},\bm{\theta}),
         \forall
         (\bar{\bm{x}},\bm{\gamma},\bm{\theta}) \in \textrm{supp}(\textrm{Prior}(\bar{\bm{x}},\bm{\gamma}, \bm{\theta})),
    \end{align}
    where $\min \bar{x}_i$ and $\max \gamma_i$ are the extreme values of $\bar{\bm{x}}$ and $\bm{\gamma}$ on their support set $\textrm{supp}(\textrm{Prior}(\bar{\bm{x}},\bm{\gamma}, \bm{\theta}))$.
\end{theorem1}

In order to prove Theorem \ref{theorem:WBBIRtogether}, we introduce the following lemma first.
\newtheorem{lemmaStart}{Lemma}
\begin{lemmaStart} \label{lemma:surplusIncreasing}
    The maximum social surplus monotonically increases with $\bar{x}_i$ and monotonically decreases with $\gamma_i$ for every producer $i$.
\end{lemmaStart}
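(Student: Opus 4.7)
The plan is to handle the two monotonicity claims separately, each via a short ``envelope-style'' argument: construct a feasible point at the new parameter that replicates (or dominates) the optimum at the old parameter.

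For monotonicity in $\bar{x}_i$, I would fix all coordinates except $\bar{x}_i$ and let $\bar{x}_i' \ge \bar{x}_i$. Let $\bm{\eta}^{*}$ denote an optimizer of $S(\bar{\bm{x}}\odot\bm{\eta},\bm{\gamma},\bm{\theta})$ over $\bm{\eta}\in[0,1]^{\dim(x_i)\times n}$ at the smaller capacity. Define a rescaled strategy $\bm{\eta}'$ by $\eta_i' = \eta_i^{*}\odot \bar{x}_i \oslash \bar{x}_i'$ and $\bm{\eta}'_{-i}=\bm{\eta}^{*}_{-i}$. Since $\bar{x}_i \le \bar{x}_i'$ componentwise, $\eta_i' \in [0,1]^{\dim(x_i)}$, so $\bm{\eta}'$ is feasible at the larger capacity. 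By construction, the accepted resources satisfy $(\bar{x}_i',\bar{\bm{x}}_{-i})\odot \bm{\eta}' = (\bar{x}_i,\bar{\bm{x}}_{-i})\odot \bm{\eta}^{*}$, hence the two values of $S(\cdot,\bm{\gamma},\bm{\theta})$ coincide. Taking the max on the enlarged feasible set then yields $S^{*}((\bar{x}_i',\bar{\bm{x}}_{-i}),\bm{\gamma},\bm{\theta}) \ge S^{*}((\bar{x}_i,\bar{\bm{x}}_{-i}),\bm{\gamma},\bm{\theta})$.

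For monotonicity in $\gamma_i$, I would use that the objective depends on $\gamma_i$ only through the cost term $c(x_i,\gamma_i)$, which by Assumption~\ref{assumption:smooth} is monotonically increasing in $\gamma_i$. Fixing everything else and taking $\gamma_i' \ge \gamma_i$, let $\bm{\eta}^{\dagger}$ be any optimizer at $\gamma_i'$. Then
\begin{align*}
S^{*}(\bar{\bm{x}},(\gamma_i',\bm{\gamma}_{-i}),\bm{\theta})
&= \sum_{j}v(\bar{\bm{x}}\odot\bm{\eta}^{\dagger},\theta_j) - c(\bar{x}_i\odot\eta_i^{\dagger},\gamma_i') \\
&\quad - \sum_{k\ne i}c(\bar{x}_k\odot\eta_k^{\dagger},\gamma_k) \\
&\le S(\bar{\bm{x}}\odot\bm{\eta}^{\dagger},\bm{\gamma},\bm{\theta})
\le S^{*}(\bar{\bm{x}},\bm{\gamma},\bm{\theta}),
\end{align*}
where the first inequality uses the pointwise bound $c(\bar{x}_i\odot\eta_i^{\dagger},\gamma_i') \ge c(\bar{x}_i\odot\eta_i^{\dagger},\gamma_i)$ and the second uses the definition of $S^{*}$ as a supremum.

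Neither step is deep; I do not expect a real obstacle. The only thing to be careful about is the vector-valued nature of $x_i$ (so $\eta_i$ lives in $[0,1]^{\dim(x_i)}$ and the rescaling $\eta_i^{*}\odot\bar{x}_i\oslash\bar{x}_i'$ must be understood componentwise), and the fact that $\bar{x}_i \le \bar{x}_i'$ should likewise be interpreted componentwise, which is the natural sense given the capacity interpretation. Both arguments then go through unchanged.
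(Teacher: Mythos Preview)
Your proposal is correct. For the $\bar{x}_i$ part, it is essentially identical to the paper's argument: both rescale the optimizer at the smaller capacity via $\eta_i^{*}\odot\bar{x}_i\oslash\bar{x}_i'$ to produce a feasible point at the larger capacity with the same accepted resources, hence the same objective value. The only cosmetic difference is that the paper perturbs the whole vector $\bar{\bm{x}}$ at once while you perturb a single coordinate; the content is the same.

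For the $\gamma_i$ part your route differs from the paper's. The paper invokes the envelope theorem to compute $\partial S^{*}/\partial\gamma_i = -\partial c/\partial\gamma_i\le 0$ at the optimum, relying on the smoothness in Assumption~\ref{assumption:smooth}. You instead take an optimizer $\bm{\eta}^{\dagger}$ at the larger $\gamma_i'$, observe that the same $\bm{\eta}^{\dagger}$ is feasible at $\gamma_i$ and gives a weakly larger objective there (since only the term $c(\cdot,\gamma_i)$ changes and it is monotone in $\gamma_i$), and then bound by the supremum. This is the standard ``feasible-point'' comparison and is slightly more elementary: it uses only the monotonicity part of Assumption~\ref{assumption:smooth}, not smoothness or differentiability of the value function. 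The paper's envelope argument, by contrast, gives the exact derivative of $S^{*}$ in $\gamma_i$, which is more information than the lemma needs but could be useful elsewhere. Either approach proves the lemma.
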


\begin{proof}
    Suppose $\bar{\bm{x}}' \ge \bar{\bm{x}}$; then, by definition,
    \begin{equation} \label{equation:lemma1Proof1}
        S^{*}(\bar{\bm{x}}', \bm{\gamma},\bm{\theta}) \ge S(\bar{\bm{x}}' \odot \bm{\eta}, \bm{\gamma},\bm{\theta}), 
        \, \forall \bm{\eta} \in [0,1]^{\textrm{dim}(x_i) \times n}.
    \end{equation}
    Particularly, Eq. \ref{equation: propositionSSMproof2} holds for 
    $\bm{\eta} =\bm{\eta}^{*}(\bar{\bm{x}},\bm{\eta},\bm{\theta}) \odot \bar{\bm{x}} \oslash \bar{\bm{x}}' \in [0,1]^{\textrm{dim}(x_i) \times n}$, i.e.,
    \begin{align} \label{equation:lemma1Proof2}
        S^{*}(\bar{\bm{x}}', \bm{\gamma}, \bm{\theta})
        & \ge
        S(\bar{\bm{x}}' \odot
        \bm{\eta}^{*}(\bar{\bm{x}},\bm{\gamma},\bm{\theta}) \odot \bar{\bm{x}} \oslash \bar{\bm{x}}'
        , \bm{\gamma}, \bm{\theta})
        \nonumber \\
        & =
        S^{*}(\bar{\bm{x}}, \bm{\gamma}, \bm{\theta}).
    \end{align}
    Therefore, $S^{*}(\bar{\bm{x}},\bm{\gamma}, \bm{\theta})$ increases with $\bar{x}_i$.

    That $S^{*}(\bar{\bm{x}},\bm{\gamma},\bm{\theta})$ decreases with $\gamma_i$ is
    due to the \emph{envelope theorem}, which results in
    \begin{equation}\label{equation:envelop}
        \frac{\partial S^{*}}{\partial \gamma_i} |_{\bm{\eta}=\eta^{*}}
        =\frac{\partial S}{\partial \gamma_i} |_{\bm{\eta}=\eta^{*}}
        =-\frac{\partial c_i}{\partial \gamma_i} |_{\bm{\eta}=\eta^{*}} \le 0.
    \end{equation}
\end{proof}

Now we can prove Theorem \ref{theorem:WBBIRtogether} as follows.
\begin{proof}[Proof of Theorem \ref{theorem:WBBIRtogether}]
    We prove sufficiency first. 
    
    If Eq. \ref{equation:IRWBB} holds, we aim to prove\\
    $h_i(\bar{\bm{x}}_{-i},\bm{\gamma}_{-i}, \bm{\theta})
    \equiv
    -[S^{*}((\min \bar{x}_i,\bar{\bm{x}}_{-i}),(\max \gamma_i,\bm{\gamma}_{-i}),\bm{\theta})
    -S^{*}_{-i}(\bar{\bm{x}}_{-i},\bm{\gamma}_{-i},\bm{\theta})], \, i=0,\ldots,n-1$
    satisfy Eq. \ref{equation:IRpremise} and \ref{equation:WBBpremise}
    for all $(\bar{\bm{x}},\bm{\gamma},\bm{\theta}) \in \textrm{supp}(\textrm{Prior}(\bar{\bm{x}},\bm{\gamma}, \bm{\theta}))$.
    
    In this case,
    all $(\bar{\bm{x}},\bm{\gamma},\bm{\theta})$ satisfy Eq. \ref{equation:IRpremise}
    because of the monotonic property
    proved in Lemma \ref{lemma:surplusIncreasing}, i.e.,
    \begin{align}
        & S^{*}((\min \bar{x}_i,\bar{\bm{x}}_{-i}),(\max \gamma_i,\bm{\gamma}_{-i}),\bm{\theta})
        \nonumber \\
        &
        \le
        S^{*}(\bar{\bm{x}},\bm{\gamma},\bm{\theta}),
        \, \forall \bar{\bm{x}},\bm{\gamma},\bm{\theta}
        \in \textrm{supp}(\textrm{Prior}(\bar{\bm{x}},\bm{\gamma}, \bm{\theta})).
    \end{align}
    
    Meanwhile, Eq. \ref{equation:WBBpremise} becomes
    \begin{align} \label{equation:IRWBBtogetherProof2}
        &
        \sum_{i=0}^{n-1}
        \{ -[S^{*}((\min \bar{x}_i,\bar{\bm{x}}_{-i}),(\max \gamma_i,\bm{\gamma}_{-i}),\bm{\theta})
        \nonumber \\
        &
        -S^{*}_{-i}(\bar{\bm{x}}_{-i},\bm{\gamma}_{-i},\bm{\theta})] \}
        \le S^{*}(\bar{\bm{x}},\bm{\gamma},\bm{\theta})
        \nonumber \\
        &
        -\sum_{i=0}^{n-1}[S^{*}(\bar{\bm{x}},\bm{\gamma},\bm{\theta})
        -S^{*}_{-i}(\bar{\bm{x}}_{-i},\bm{\gamma}_{-i}, \bm{\theta})],
    \end{align}
    which can be deduced from Eq. \ref{equation:IRWBB}.

    To prove necessity, we set $\bar{\bm{x}} = (\min \bar{x}_i, \bar{\bm{x}}_{-i})$ and
    $\bm{\gamma} = (\max \gamma_i, \bm{\gamma}_{-i})$ in 
    Eq. \ref{equation:IRpremise} and get a necessary condition: 
    \begin{align} \label{equation:IRWBBtogetherProof}
        &h_i(\bar{\bm{x}}_{-i},\bm{\gamma}_{-i},\bm{\theta})
        \ge -   
        [S^{*}((\min \bar{x}_i,\bar{\bm{x}}_{-i}),
        (\max \gamma_i,\bm{\gamma}_{-i})
       ,\bm{\theta}) 
        \nonumber
        \\
        &
         -S^{*}_{-i}(\bar{\bm{x}}_{-i},\bm{\gamma}_{-i},\bm{\theta})]
        ,  \forall \bar{\bm{x}},\bm{\gamma},\bm{\theta} \in \textrm{supp}(\textrm{Prior}(\bar{\bm{x}},\bm{\gamma}, \bm{\theta})).
    \end{align}
    Eq. \ref{equation:IRWBBtogetherProof}, together with Eq.
    \ref{equation:WBBpremise}, results in Eq. \ref{equation:IRWBB}.
\end{proof}

There are some useful corollaries of Theorem \ref{theorem:WBBIRtogether}.

\newtheorem{corollary1}[corollaryStart]{Corollary}
\begin{corollary1} \label{corollary:accurateEstimation}
    The functional solution $\bm{h}^*(\cdot)$ to Eq. \ref{equation:calculateAdjustment} exists if the prior estimation on
    $(\bar{\bm{x}},\bm{\gamma}$ is accurate enough.
\end{corollary1}
\begin{proof}
    When the the prior estimation on $(\bar{\bm{x}},\bm{\gamma})$ is accurate
    enough, $\bar{x}_i$ approaches to $\min \bar{x}_i$ and $\gamma_i$ approaches to $\max \gamma_i)$; hence, the left side of Eq. \ref{equation:IRWBB} approaches $0$, while
    the right side approaches a positive number.
\end{proof}

\newtheorem{corollary3}[corollaryStart]{Corollary}
\begin{corollary3} \label{corollary:superAdditivityLeadsToCorollary2}
    If the individual valuation function is super additive and with decreasing cross marginal returns, 
    there exists at least one solution $\bm{h}^*(\cdot)$ to Eq. \ref{equation:calculateAdjustment} such that dominant incentive compatibility, ex-post allocative efficiency, ex-post
    individual rationality,
    and ex-post weak budget balancedness are simultaneously attained on the supply side.
\end{corollary3}

\begin{proof}

According to Lemma \ref{lemma:surplusIncreasing},
\begin{align}
& S^{*}((\min\bar{x}_i, \bar{\bm{x}}_{-i}),(\max \gamma_i, \bm{\gamma}_{-i}), \bm{\theta} )
\\
& \ge S^{*}((\bm{0}, \bar{\bm{x}}_{-i}),(\max \gamma_i, \bm{\gamma}_{-i}), \bm{\theta} ) = S^{*}((\bm{0}, \bar{\bm{x}}_{-i}),\bm{\gamma}, \bm{\theta}).
\nonumber
\end{align}
Hence, a sufficient condition for Eq. \ref{equation:IRWBB} to hold in this case is
    \begin{align} \label{equation:IRWBBnoInformationAsymmetry}
        &\sum_{i=0}^{n-1}[
            S^{*}(
                \bar{\bm{x}},
                \bm{\gamma},\bm{\theta}) -
                S^{*}((\bm{0}, \bar{\bm{x}}_{-i}),\bm{\gamma}, \bm{\theta} )
        ]
        \nonumber \\
        &
         \le S^{*}(\bar{\bm{x}},\bm{\gamma},\bm{\theta}), \quad \forall \bar{\bm{x}},\bm{\gamma},\bm{\theta} \in \textrm{supp}(\textrm{Prior}(\bar{\bm{x}},\bm{\gamma}, \bm{\theta})).
    \end{align}

For arbitrary $(\bar{\bm{x}},\bm{\gamma}, \bm{\theta})\in \textrm{supp}(\textrm{Prior}(\bar{\bm{x}},\bm{\gamma},\bm{\theta}))$, suppose $\bm{\eta}^{*}$ maximizes
$S(\bar{\bm{x}} \odot \bm{\eta}, \bm{\gamma}, \bm{\theta})$. The corresponding
optimal social surplus are denoted by  $S^{*}(\bar{\bm{x}},\bm{\gamma},\bm{\theta})$, i.e.,
\begin{align}\label{equation:etaCorollary1}
    &S^{*}(\bar{\bm{x}},\bm{\gamma},\bm{\theta})
    = S(\bar{\bm{x}}\odot \bm{\eta}^{*}, \bm{\gamma}, \bm{\theta})
    \nonumber \\
    &
       = \sum_{j=n}^{n+m-1}v(\bar{\bm{x}}\odot\bm{\eta}^{*},\theta_j)-\sum_{i=0}^{n-1}c(\bar{x}_i \odot \eta^{*}_i, \gamma_i).
\end{align}
We use $\eta^{*}_{i}$ and $\bm{\eta}^{*}_{-i}$ to denote the components of $\bm{\eta}^{*}$.

According to the definition of $S^{*}((\bm{0}, \bar{\bm{x}}_{-i}),\bm{\gamma}, \bm{\theta})$
, we know
\begin{align}
  \label{equation:socialSurplusCompare2}
  S^{*}((\bm{0},\bar{\bm{x}}_{-i}),\bm{\gamma}, \bm{\theta})
  \ge
  S((\bm{0},\bar{\bm{x}}_{-i}) \odot \bm{\eta},\bm{\gamma}, \bm{\theta})
  ,\, \forall \bm{\eta},
\end{align}
which is particularly true for $\bm{\eta} = \bm{\eta}^{*}$

According to Assumption \ref{assumption:decreasingMarginal} and \ref{assumption:zeroContribution},
\begin{align}\label{equation:marginalDecreaseProof1}
  & v(\bar{\bm{x}} \odot \bm{\eta}^{*},\theta_j)
  -
  v((\bm{0},\bar{\bm{x}}_{-i}) \odot \bm{\eta}^{*},\theta_j)
  \nonumber \\
  =
    & v((\bar{x}_i \odot \eta^{*}_i ,\bar{\bm{x}}_{-i} \odot \bm{\eta}^{*}_{-i}),\theta_j)
  -
  v((\bm{0},\bar{\bm{x}}_{-i} \odot \bm{\eta}^{*}_{-i}),\theta_j)
  \nonumber \\
  \le &
  v((\bar{x}_i \odot \eta^{*}_i , \bm{0},\ldots,\bm{0}),\theta_j)
  -
  v((\bm{0},\bm{0},\ldots,\bm{0}),\theta_j)
  \nonumber \\
  = &
  v(\bar{x}_i \odot \eta^{*}_i,\theta_j)
  , \, \forall i \in N, j\in M.
\end{align}



Then, we can derive
\begin{align}
   \label{equation:optimalSocialSurplusSuperAdditivity}
   & S^{*}(\bar{\bm{x}},\bm{\gamma},\bm{\theta})
    =
    S(\bar{\bm{x}}\odot\bm{\eta}^{*}, \bm{\gamma},\bm{\theta})
  \nonumber \\
    &    
    = \sum_{j=n}^{n+m-1}v(\bar{\bm{x}} \odot \bm{\eta}^{*},\theta_j)
    -\sum_{i=0}^{n-1}c(\bar{x}_i \odot \eta^{*}_i,\gamma_i)
    \nonumber \\
    &    
    \ge \sum_{j=n}^{n+m-1}\sum_{i=0}^{n-1} v(\bar{x}_i \odot \eta^{*}_i,\theta_j)
    - \sum_{i=0}^{n-1} c(\bar{x}_i \odot \eta^{*}_i,\gamma_i)
    \nonumber \\
    &
     \ge \sum_{i=0}^{n-1}\{ \sum_{j=n}^{n+m-1}[ 
         v(\bar{\bm{x}} \odot \bm{\eta}^{*},\theta_j)
  -
  v((\bm{0},\bar{\bm{x}}_{-i}) \odot \bm{\eta}^{*},\theta_j)]
    \nonumber \\
    &
    \quad 
    - c(\bar{x}_i \odot \eta^{*}_i,\gamma_i)\}
    \nonumber \\
    &
       = \sum_{i=0}^{n-1} [\sum_{j=n}^{n+m-1} 
     v(\bar{\bm{x}} \odot \bm{\eta}^{*},\theta_j)
    -\sum_{k=0}^{n-1}c(\bar{x}_k \odot \eta^{*}_k,\gamma_k)]
    \nonumber \\
    &
  \quad - \sum_{i=0}^{n-1} [ \sum_{j=n}^{n+m-1}  v((\bm{0}, \bar{\bm{x}}_{-i}) \odot \bm{\eta}^{*},\theta_{j})
  \nonumber \\
  &
  \quad \quad \quad 
   - c(\bm{0} \odot \eta^{*}_i,\gamma_i)
  - \sum_{k=0, \ne i}^{n-1}
     c(\bar{x}_k \odot \eta^{*}_k,\gamma_k)]
    \nonumber \\
    &
    =\sum_{i=0}^{n-1}[
        S(\bar{\bm{x}} \odot \bm{\eta}^{*},\bm{\gamma},\bm{\theta})
       - S((\bm{0}, \bar{\bm{x}}_{-i}) \odot \bm{\eta}^{*},\bm{\gamma}, \bm{\theta})]
    \nonumber \\
    &
    \ge \sum_{i=0}^{n-1}[
         S^{*}(\bar{\bm{x}},\bm{\gamma},\bm{\theta})
         - S^{*}((\bm{0}, \bar{\bm{x}}_{-i}),\bm{\gamma},\bm{\theta})
    ]
    ,
\end{align}
where the first ineqaulity is due to super additivity, the second inequality is a result from Eq. \ref{equation:marginalDecreaseProof1}, and the third inequality is from Eq. \ref{equation:etaCorollary1} and \ref{equation:socialSurplusCompare2}. Eq. \ref{equation:optimalSocialSurplusSuperAdditivity} is true for arbitrary $(\bar{\bm{x}},\bm{\gamma}, \bm{\theta})\in \textrm{supp}(\textrm{Prior}(\bar{\bm{x}},\bm{\gamma},\bm{\theta}))$, so
we get Eq. \ref{equation:IRWBBnoInformationAsymmetry}, which is a sufficient condition for the existence of $\bm{h}^*(\cdot)$ as a solution to \ref{equation:calculateAdjustment}.
\end{proof}

Lastly, it is worth pointing out even for those cases where there is no solution to Eq. \ref{equation:calculateAdjustment} (i.e., Assuption \ref{assumption:superAdditivity}-\ref{assumption:cremer} do not hold), an optimal $\bm{h}^{*}$ can still be found by minimizing the expected $LOSS$ in Eq. \ref{equation:minimizeLoss}. Such an optimal $\bm{h}^{*}$ guarantees the expected deviation from the IR and WBB constraints are minimized.

\section{Learning the Optimal Adjustment Payments}

\begin{figure}[t]
    \centering
    \includegraphics[width = 0.9\columnwidth]{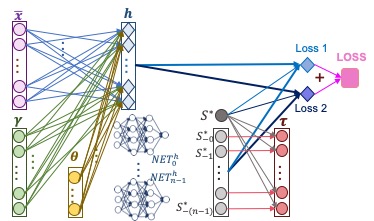}
    \caption{The structure of the composite neural network of PVCG}
    \label{fig:bigNetwork}
\end{figure}

Corollary \ref{corollary:minimizeLoss} informs us that we can learn the optimal adjustment payments $\bm{h}^*(\cdot)$ by minimizing the expected LOSS in Eq. \ref{equation:minimizeLoss}. Also, we know neural networks can approximate arbitrary continuous functions to arbitrary precisions \cite{funahashi1989approximate}. Therefore, we construct $n$ neural networks $\textrm{NET}^{h}_i, i \in N$ to approximate $h_i(\cdot), i \in N$. Output nodes of these $n$ networks, denoted by $\textrm{NET}^{h}_i.o, i \in N$, are combined into a single \emph{composite neural network} in Figure \ref{fig:bigNetwork} with the following loss function:

\begin{align} \label{equation:lossFunction2}
&   \textrm{LOSS}
    = \frac{1}{T}\sum_{t=0}^{T-1}
    \{
    \sum_{i=0}^{n-1}
        \textrm{ReLu}[
            - (S^{*t} -S^{*t}_{-i})
          -
          \textrm{NET}^h_i.o
          ]
    \nonumber \\
    & 
    + \textrm{ReLu} [
    \sum_{i=0}^{n-1}(
    [(S^{*t}
     - S^{*t}_{-i})
    + \textrm{NET}^h_i.o]
        - S^{*t}
    )
    ],
\end{align}
where each sample $(\bar{\bm{x}}^t,\bm{\gamma}^t, \bm{\theta}^t)$ is drawn from their prior distribution $\textrm{Prior}(\bar{\bm{x}},\bm{\gamma},\bm{\theta})$ and $T$ is the sample size. For the $t$th sample, $\bm{\tau}^{t} = \bm{\tau}(\bar{\bm{x}}^{t},\bm{\gamma}^{t}, \bm{\theta}^t)$, $S^{*t}=S^{*}(\bar{\bm{x}}^{t},\bm{\gamma}^{t}, \bm{\theta}^t)$, and $S^{*t}_{-i}=S^{*}(\bar{\bm{x}}^{t}_{-i},\bm{\gamma}^{t}_{-i},\bm{\theta}^t_{-i})$.
Since we only need synthetic data to train this network, we can generate as many data as needed. As a result, LOSS can be minimized to its theoretical minimum almost perfectly in experiments.

\begin{figure}[t]
    \centering
    \includegraphics[width = 0.95\columnwidth]{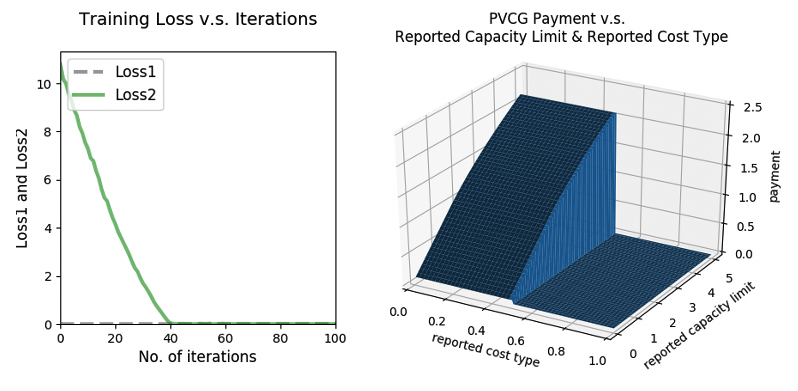}
    \caption{Training loss v.s. iterations (left) and PVCG payment v.s. reported capacity limit \& reported cost type (right)}
    \label{fig:paymentCurve}
\end{figure}

To illustrate the effectiveness of this neural network method and to show the correctness of Corollary \ref{corollary:superAdditivityLeadsToCorollary2}, we learned the adjustment payments for a hypothetical scenario. We set
the individual valuation functions and individual cost functions as follows:
\begin{equation}\label{equation:revenueCostSpecification}
    v(\bm{x}) =\theta_i \sqrt{n (\sum_{k=0}^{n-1}{x}_k) }
    \, \textrm{and} \,
    c_i(x_i,\gamma_i) = \gamma_i x_i, i\in N.
\end{equation}
Note that the individual valuation function in Eq. \ref{equation:revenueCostSpecification} is a special case of that in Eq. \ref{eqaution:exampleValueFunction}, where $z(x_i)$ is set to be $z(x_i) = \sqrt{x_i}$. We report the experiment results for $n = 10, m=2$, $\textrm{Prior}(\bar{x}_i) = \textrm{Uniform}[0,5], i \in N$, $\textrm{Prior}(\gamma_i) = \textrm{Uniform}[0,1], i \in N$, and $\textrm{Prior}(\theta_j)=[0,1],j\in M$. We let
$\textrm{NET}^{h}_i, i\in N$ each have three $10$-dimensional hidden layers.


The loss curve is shown in the left figure of Fig.
\ref{fig:paymentCurve}. The training loss fast converges to $0$ after $40$ iterations, as expected. 
After we obtain the trained networks
$[\textrm{NET}^{h}_i], i \in N$, we can use trained networks to calculate PVCG payments $\bm{p}(\hat{\bm{x}},\hat{\bm{\gamma}}, \hat{\bm{\theta}})$ for any reported $(\hat{\bm{x}},\hat{\bm{\gamma}},\hat{\bm{\theta}})$. For illustration, we draw $p_0$, the
payment to participant $0$, with respect to $\hat{x}_0$ and $\hat{\gamma}_0$ in the right figure in Fig.
\ref{fig:paymentCurve}, fixing parameters of other participants at $\hat{x}_i \equiv
2.5, \hat{\gamma}_i \equiv 0.5, i \in N,\neq 0,  \hat{\theta}_j \equiv 0.5, j\in M$.

We can see that $p_0$ increases with
$\bar{x}_0$. This aligns well with our expectation that the higher the capacity limit a producer reports, the more input resources are contributed by this producer; hence, it receives higher payments. Also, $p_0$ remains constant with $\gamma_0$ when $\gamma_0$ is below a threshold and
sharply drops to around $0$ when $\gamma_0$ passes the threshold.
This implies that the payment to a producer should only be affected by its contribution to the cooperative production process rather than its cost, but if a producer's cost is too high, the optimal social choice is to exclude this producer from the production coalition and thus pay it nothing.

\section{Conclusions and Future Work}
In this paper, we presented the Procurement-VCG (PVCG) mechanism for incentivizing producers to collaboratively produce some valuable virtual products. PVCG incentivizes capacitated producers to truthfully report their capacity limits and their cost types, based on which we optimize for social surplus and guarantee ex-post allocative efficiency. With some reasonable assumptions, we prove that PVCG is ex-post individually rational and ex-post weakly budget balanced. To our knowledge, PVCG is among the very few mechanisms in the mechanism design literature that attain these four objectives simultaneously.

Since this paper focuses on theoretical analyses, we reported limited experimental results. When the number of producers increase, we require more advanced algorithms to learn the PVCG payments efficiently. Besides, for nonanalytic individual valuation functions and individual cost functions, calculating the optimal social surplus is computationally expensive as well. These computational issues are beyond the scope of this paper and left for future research. Also, in order to apply PVCG to scenarios such as federated learning, some special-purpose engineering designs (e.g., a sandbox) are required to measure the required parameters in our model, e.g., the input datasets and the output model. These implementation details will be discussed in our subsequent research. Moreover, in our following work, we will report simulation results that compare PVCG with other sharing rules for cooperative games, e.g., Shapley \cite{jia2019towards}, Labour Union \cite{gollapudi2017profit}, etc.

\bibliography{Bibliography-File}
\bibliographystyle{aaai}

\end{document}